\documentclass[conference]{IEEEtran}

\usepackage{cite}
\usepackage{amsmath,amsfonts}
\usepackage{graphicx}
\usepackage{textcomp}
\usepackage{xcolor}
\usepackage{amsmath}
\usepackage{amsthm}
\usepackage{mathtools}
\usepackage{todonotes}
\usepackage{subcaption}
\usepackage{hyperref}
\usepackage{booktabs}
\usepackage{algorithm}
\usepackage[commentColor=blue]{algpseudocodex}
\usepackage{listings}
\usepackage{xcolor}
\usepackage{minted}
\usepackage{hhline}
\usepackage{array}
\usepackage{multirow}
\usepackage{enumitem}

\newtheorem{lemma}{Lemma}

\DeclareMathOperator*{\argmax}{arg\,max}

\def\BibTeX{{\rm B\kern-.05em{\sc i\kern-.025em b}\kern-.08em
    T\kern-.1667em\lower.7ex\hbox{E}\kern-.125emX}}

% -------------------------------------------------------------------------
% Useful macros
% -------------------------------------------------------------------------
% Begin Document
\begin{document}
\newcommand{\Aydin}[1]{{\color{red}(Aydin) #1}}
\newcommand{\Dmitriy}[1]{{\color{magenta}(Dmitriy) #1}}
\newcommand{\Kathy}[1]{{\color{cyan}(Kathy) #1}}
\title{Distributed-Memory Parallel Algorithms for Fixed-Radius Near Neighbor Graph Construction}

%\author{Anonymous}
\author{
    \IEEEauthorblockN{Gabriel Raulet\IEEEauthorrefmark{1}, Dmitriy Morozov \IEEEauthorrefmark{2}, Ayd\i n Bulu\c{c}\IEEEauthorrefmark{1}\IEEEauthorrefmark{2}, Katherine Yelick \IEEEauthorrefmark{1}}
    \IEEEauthorblockA{\IEEEauthorrefmark{1}\textit{University of California, Berkeley, CA, USA}}
    \IEEEauthorblockA{\IEEEauthorrefmark{2}\textit{Lawrence Berkeley 
    National Laboratory, Berkeley, CA, USA}}
}

%\author{\IEEEauthorblockN{Gabriel Raulet}
%\IEEEauthorblockA{
%University of California, Berkeley\\
%Berkeley, CA, USA \\
%gabe.h.raulet@berkeley.edu}
%\and
%\IEEEauthorblockN{Dmitriy Morozov}
%\IEEEauthorblockA{
%Lawrence Berkeley National Laboratory\\
%Berkeley, CA, USA \\
%dmorozov@lbl.gov}
%\and
%\IEEEauthorblockN{Ayd\i n Bulu\c{c}}
%\IEEEauthorblockA{
%Lawrence Berkeley National Laboratory\\
%Berkeley, CA, USA \\
%abuluc@lbl.gov}
%\and
%\IEEEauthorblockN{Katherine Yelick}
%\IEEEauthorblockA{
%University of California, Berkeley\\
%Berkeley, CA, USA \\
%yelick@berkeley.edu}
%}

\maketitle
\begin{abstract}

Computing fixed-radius near-neighbor graphs is an important first step for many data analysis algorithms. Near-neighbor graphs connect points that are close under some metric, endowing point clouds with a combinatorial structure.
As computing power and data acquisition methods advance, diverse sources of large scientific datasets would greatly benefit from scalable solutions to this common subroutine for downstream analysis.
Prior work on parallel nearest neighbors has made great progress in problems like k-nearest and approximate nearest neighbor search problems, with particular attention on Euclidean spaces. Yet many applications need exact solutions and non-Euclidean metrics.

This paper presents a scalable sparsity-aware distributed memory algorithm using cover trees to compute near-neighbor graphs in general metric spaces. We provide a shared-memory algorithm for cover tree construction and demonstrate its competitiveness with state-of-the-art fixed-radius search data structures. We then introduce two distributed-memory algorithms for the near-neighbor graph problem, a simple point-partitioning strategy and a spatial-partitioning strategy, which leverage the cover tree algorithm on each node. Our algorithms exhibit parallel scaling across a variety of real and synthetic datasets for both traditional and non-traditional metrics. On real world high dimensional datasets with one million points, we achieve speedups up to 678.34x over the state-of-the-art using 1024 cores for graphs with 70 neighbors per vertex (on average), and up to 1590.99x using 4096 cores for graphs with 500 neighbors per vertex (on average).

\end{abstract}
\begin{IEEEkeywords}
Computational geometry, Topological data analysis, Distributed-memory parallelism
\end{IEEEkeywords}

% Some macros for math

% Sections
\section{Introduction}

A wide range of algorithms require near neighbor graphs as a subroutine, including non-linear dimensionality reduction techniques (Isomap, UMap, tSNE, etc.)~\cite{tenenbaum00} \cite{McInnes2018} \cite{vandermaaten08a}, clustering methods (DBSCAN, spectral clustering, HipMCL, FOF) \cite{ester96} \cite{azad18}, density estimation, and topological data analysis (Vietoris-Rips complexes) \cite{zomordian10}.

The basic problem has several formulations. Two of the most common ones are: given a set of $n$ points $P$ in some metric space $(X,d)$, find (1) all pairs of points within some threshold distance $\epsilon$; or (2) the $k$ nearest neighbors of every point in $P$. The choice between the two depends on the downstream algorithm and is always tightly coupled to its analysis. In this paper, we are concerned only with the first setting: a fixed-radius near-neighbor search~\cite{bentley275}.

When $X$ is a Euclidean space, the points are described by their coordinates and graph construction algorithms exploit this additional structure by partitioning the space (for example, k-d trees). But there are many other settings where the metric is not Euclidean---for example, edit distance is used to compare strings (genomic analysis) and Wasserstein distance is used to compare point distributions (statistics and topological data analysis). In this paper, we are concerned with the most general metric setting, where only the triangle inequality can be assumed.

Because the datasets can be large, and distance computations expensive, a robust solution requires us to distribute the computation across many compute nodes. There are different regimes applicable based on the sparsity of the output graph. When the graph is dense and its construction is compute-bound (large $\epsilon$ and/or high intrinsic dimensionality), one can do no better than parallelizing all $n \choose 2$ pairwise distances and pruning those above the threshold. In most interesting real-world applications, however, the output graph is assumed to be sparse. This hypothesis is usually justified by assuming that the input points sample some lower dimensional space, the true ``data manifold'', and it is the intrinsic dimensionality of that space that controls the number of edges, rather than the ambient space.

In this sparse setting, it is possible to build the graph in sub-quadratic time. Many (single-node) data structures have been developed to accelerate this computation (e.g., ball trees, vp-trees, etc.)\cite{omohundro1989balltree} \cite{yianilos93}. For general metric spaces, cover trees are an attractive method both because of their theoretical guarantees and empirical performance. In this paper, we provide an implementation of a shared-memory cover tree optimized for batch construction. We show that our implementation is efficient and competitive with state of the art fixed-radius near neighbor search approaches in a shared-memory environment.

To build the near neighbor graph in distributed-memory, we developed two different algorithms that use cover trees as a data structure to accelerate neighbor queries in a local address space. The first approach is inspired by distributed particle simulators, and uses a simple point-partitioning strategy in conjunction with a ring pipeline to find all pairwise neighbors. The second approach is a spatial partitioning strategy, which constructs a finite Voronoi diagram with a fixed number of centers, and then coalesces the Voronoi cells onto processors. This approach allows more efficient usage of the cover tree and reduces the required number of queries. At higher node counts the communication requirements of the spatial partitioning method (an expensive all-to-all collective) leads to some degradation in scaling. As a result, for some datasets, the point-partitioning strategy performs better. To remedy this, a hybrid approach using the ring pipeline to perform queries of a subset of intra-cell pairs improves the scalability of the spatial-partitioning approach, and in most cases is the overall winner.

\section{Related Work}
\label{sec:related}
Nearest neighbor search is an important and well-studied topic. Most algorithms can be divided into indexing and querying stages. Indexing is done once and is the place where internal data structures, such as trees and graphs, are built. Querying happens for each point using these data structures.

Sparsity in the output of nearest neighbor search is a key property for its efficiency. Sparsity can be enforced by limiting the search to the k nearest neighbors, resulting in the acronym KNN search. 
Xiao and Biros~\cite{xiao2016parallel} give distributed-memory parallel algorithms for KNN search. They present both brute-force algorithms as well as parallelizations of the randomized k-d tree data structure.  

Alternatively, output sparsity can be enforced by limiting the search to a \emph{fixed radius}, which is the problem we solve in this paper. 
The closest work to ours is the SNN search method of Chen and Guttel~\cite{chen2024fast} that solves the exact fixed-radius nearest neighbors problem. SNN algorithm first computes a thin SVD in $O(nd^2)$ time for $n$ points in $d$-dimensional space. The inner product of the first right singular vector, i.e., the principal component, with each point becomes that point's score. Points are then sorted and filtered based on their scores. This completes the indexing phase. Querying uses BLAS3 operations for high performance.  
Gu et al.~\cite{gu2022parallel} present a work-efficient shared-memory parallel algorithms for cover trees, but without an implementation or experimental results. We build on their algorithms in this work.

%Another result we use in our parallel algorithm design is by Jahanseir and Sheehy~\cite{jahanseirconstructing}, who introduce an efficient method to build hierarchical data structures, like cover trees for metric spaces, using \emph{locally greedy permutations}. A greedy permutation is an ordering of points in a metric space where each successive point is chosen to maximize its minimum distance from all previously selected points. This method, also known as a Gonzalez ordering or farthest-point traversal, provides a structured way to sample points while maintaining uniform coverage across the space. A locally greedy permutation is a generalization of greedy permutations, where each point in the permutation is chosen to be $\delta$-close to its nearest predecessor in the existing set. This relaxes the strict greedy requirement while preserving useful geometric properties. 

A growing number of methods solve the ``approximate'' nearest neighbors problem. Fast approximate nearest neighbor (ANN) construction methods have been designed to exploit input characteristics such as ultra high dimensionality~\cite{wang2018randomized}.  A recent promising technique called small-world navigable graphs (NSW)~\cite{malkov2014approximate} is outperformed by tree-based methods for low-to-medium dimensionalities~\cite{naidan2015permutation}. Its state-of-the-art extension, hierarchical NSW~\cite{malkov2018efficient}, while performing better for low-to-medium dimensional datasets, comes with distributed-memory parallelization limitations. This is because the search algorithm of HNSW starts from the top layer, making it hard to distribute due to congestion of the higher layer elements. 

Many other notable methods for ANN search involve randomization, such as those that use locality sensitive hashing (LSH)~\cite{indyk1998approximate}, random projection trees (RPTs)~\cite{dasgupta2008random}, or random rotations~\cite{jones2011randomized}. A recent study on distributed-memory parallel sparse RPTs~\cite{ranawaka2023distributed} showed that there is no clear winner among ANN construction methods, demonstrating the need for more research. 

\section{Definitions}
Our work is concerned with general metric spaces. We formally introduce them here, as well as a few other helpful definitions.
A \textbf{metric space} $(X,d)$ is a set of points $X$ along with a distance function $d : X \times X \to \mathbb{R}$ such that for all $x,y,z \in X,$
\begin{itemize}
    \item[(i)] $d(x,y) \geq 0$ (distances are nonnegative),
    \item[(ii)] $d(x,y) = 0$ if and only if $x = y$ (distance to self is $0$),
    \item[(iii)] $d(x,y) = d(y,x)$ (distances are symmetric), and
    \item[(iv)] $d(x,y) \leq d(x,z) + d(z,y)$ (triangle inequality).
\end{itemize}
Requirement (ii) in the above definition implies that duplicate points do not exist. We cannot usually make this assumption with real world datasets. Duplicate points are automatically identified in the cover tree batch construction algorithm and assigned to a common leaf vertex. This allows us to treat them as single points (associated with multiple graph vertices) without violating the cover tree invariants.
Any finite subset of $P \subset X$ is a \textbf{finite metric space}. Throughout this paper, we assume that our dataset is a finite metric space $P \subset X$ with $n$ points, whose metric is inherited from the larger space $X$. Common distance metrics include the Euclidean $l_2$ norm, cosine distance, hamming distance, and edit distance, for example.
The main goal of this paper is to construct a \textbf{near neighbor graph}. Formally, given a set $P \subset X$ of $n$ points and a threshold parameter $\epsilon > 0$, an $\epsilon$-\textbf{graph} on $P$ is an undirected graph $G = (V,E)$ where $V = P$ and $$E = \{(p,q) : d(p,q) \leq \epsilon\}.$$
For any subset $S \subset P$, the distance between a point $p \in P$ and the set $S$ is the distance between $p$ and its nearest neighbor in $S$. That is, $d(p,S) = \min_{x \in S}d(p,x).$
%The \textbf{diameter} of a set of points $P$ is the maximum pairwise distance between points of $P$. Th \textbf{co-diameter} of $P$ is the minimum pairwise distance between distinct points of $P$.
The \textbf{ball} in $P$ with center $p$ and radius $r \geq 0$ is the set of points $$B(p,r) = \{x \in P : d(p,x) \leq r\}.$$
We denote the set of balls of radius $r$ centered at the points of a set $S \subset P$ by $S_r = \bigcup_{p \in S}B(p,r).$
If $S \subset P$, then the \textbf{coverage radius} of $S$ with respect to $P$ is the minimum value $r \geq 0$ such that $P \subset S_r.$
$S$ is called an \textbf{$r$-net} for $P$ if it has coverage radius at most $r$ and minimum separation at least $r$ (i.e., $d(p,q) \geq r$ for all $p \neq q \in S$).
The \textbf{spread} of a set $P$, denoted by $\Delta(P)$, is the ratio of the largest to the smallest pairwise distances in $P$. The spread is alternatively called the \textbf{aspect ratio} in the literature.

Suppose $C = \{c_1, \ldots, c_m\} \subset P$. A \textbf{Voronoi diagram} is a partitioning of $P$ induced by the points of $C$ such that $$V_i = \{p \in P : d(p,c_i) = d(p,C)\}.$$
Each $c_i$ is called a \textbf{Voronoi site}, or \textbf{center}, and its corresponding set of points $V_i$ is called a \textbf{Voronoi cell}. Note that if two points are the same distance from a given center $c_i$, we only assign one of the points to $V_i$ to avoid double counting.
We define the radius of $V_i$ to be $r_i = \max_{p \in V_i}d(p,c_i).$
Intuitively, a Voronoi cell associated with a center $c_i$ is the set of all points whose closest center is $c_i.$

We say that the finite metric space $P$ has \textbf{expansion constant}\cite{karger02} $c$ if $c \geq 2$ is the smallest value such that for all $p \in P$ and $r > 0$, we have
$$|B(p,2r)| \geq c \cdot |B(p,r)|.$$
The expansion constant formalizes a notion of the intrinsic dimensionality of a general metric space. Metric spaces with high expansion constant are, intuitively, less uniform and more difficult to handle. Lower expansion constants, however, decrease the exponential blowup seen in the aforementioned case.

We now introduce our main data structure. Let $S \subset P$ be a set of $m$ points. A \textbf{cover tree}\cite{beygelzimer06} on $S$ is a tree $T = T(S)$ whose vertices are each associated with a point in $S$. 
The lowest level of the tree contains $m$ leaf vertices, each of which is associated with a unique point in $S$.
Every vertex in the tree is uniquely identifiable by its point in $S$ and its level in $T$. The vertex associated with a point $p$ and level $k$ is denoted by $p^k \in T$. The parent of a vertex $p^k$ is on level $k+1$, and is denoted by $par(p^k)$.
The children of a vertex $p^k$ are on level $k-1$ and are denoted by $ch(p^k).$ 
In addition, a cover tree must fulfill the following invariants:

\begin{itemize}
    \item[(i)] (\textbf{nesting property}) \emph{Every non-leaf vertex must have a nested child associated with the same point as itself};
    \item[(ii)] (\textbf{covering property}) \emph{$ch(p^k) \subset B(p,2^k)$ for all $p^k \in T$}; and
    \item[(iii)] (\textbf{separating property}) \emph{$d(p_1,p_2) \geq 2^k$ for all $p_1 \neq p_2 \in ch(q^{k+1})$}.
\end{itemize}

The stated separating property is a relaxation of the original definition, which required all points on the same level to fulfill the separating condition (as opposed to just siblings). In practice, this has been found to perform comparably to the stricter definition \cite{izbicki15}.
The runtime complexity's of the cover tree are parameterized by the number of points and the expansion constant of the given space. It has been pointed out in the literature that the original paper introducing cover trees gave a time bound of $O(c^{12}\log{n})$, but that the proof had an error\cite{elkin23}. Further work demonstrated a new time bound but required defining a new structure they call a \textbf{compressed cover tree}\cite{elkin22}. We did not implement this structure, and the relaxed separating property removes a condition relied on by these runtime bounds.
That being said, we do make the assumption here that $P$ has low intrinsic dimensionality, for otherwise the cover tree does not help, and we can do no better than brute force. Under this assumption, we simply drop the constant $c$ and obtain worst-case tree construction bounds of $O(n\log{n})$ and single point query bounds of $O(\log{n})$.

%The runtime complexity's of the cover tree are parameterized by the number of points and the expansion constant of the given metric space. It can be shown that the number of children for a given vertex in the tree is bounded by $O(c^4)$.
%It has been proved by ~\cite{beygelzimer06} that cover trees can be constructed in $O(c^6n\log{n})$ time for $n$ points, and point queries can be serviced in $O(c^{12}\log{n})$ time.

\section{Algorithm}

\subsection{Cover Tree Batch Construction}

We provide here a batch construction algorithm that avoids making $n$ consecutive point insertions. The basic idea is that we start with all the points in the dataset and select one to be the root of the tree. We then find the children of the root which form an $r$-net, where $r$ is just greater than half the distance from the tree root to the farthest point in the dataset (to satisfy the separating property). We then partition the points as a Voronoi diagram whose centers are the tree root, and recurse on each center treated as a new subtree root. We treat vertices and their descendants as a \textbf{vertex triple} $(H, p, r)$ with an associated level $l$, meaning that $p^l$ is in the tree, and such that
\begin{itemize}
    \item $H \subset P$ is a subset of points (descendants of $p^l$),
    \item $p \in H$ is the \textbf{root} of $H$, and
    \item $r \equiv \max_{q \in H}d(p,q)$ is the \textbf{radius} associated with $p^l$.
\end{itemize}
Associated with vertex triple is its \textbf{ball} $$B(H) \equiv B(p,r) = \{q \in P : d(p,q) \leq r\}.$$
Note that the ball for $H$ may (and often does) contain points outside of $H$.

Assume we are given a vertex triple $(H,\pi_1,r)$ associated with vertex $\pi_1^l$.
The key primitive we require is a vertex splitting function which selects children $\pi_1, \pi_2, \ldots, \pi_{m} \in H$ that induce child triples $(H_1, \pi_1, r_1), \ldots, (H_m, \pi_m, r_m)$
such that
$$H_i = \{p \in H : d(p,\pi_i) \leq d(p,\pi_j) \text{ for all } i \neq j\},$$
and the following invariants are maintained:
\begin{itemize}
    \item[(1)] \textbf{Covering}. $H \subset \bigcup_{i=1}^{m}B(\pi_i, r/2),$
    \item[(2)] \textbf{Separating}. $d(\pi_i, \pi_j) > r/2$ for all $i \neq j.$
\end{itemize}
We then insert vertices $\pi_1^{l-1}, \pi_2^{l-1}, \ldots, \pi_m^{l-1}$ into $T$ as children of $\pi_1^l$. Note that we always select $\pi_1$ as a child to maintain the nesting invariant.
Our splitting function is outlined in Algorithm~\ref{alg:vertex_split}.
The splitting function is used as a key primitive for building a shared memory cover tree, outlined in Algorithm \ref{alg:tree_construct}, which iterates level by level until there are no more vertex triples left.

\subsection{Cover Tree Batch Queries}
For this paper, we are interested in finding all points withing distance $\epsilon$ of the query point. To do this with the cover tree, we start from the root and iterate down from higher to lower levels. At level $l$, we collect tree vertices $p^l$ within a distance $2^l + \epsilon$ of the query point, discarding the rest. We then take all the children of the vertices we haven't discarded to be the tree vertices we explore at the next level, $l-1$. When we reach a leaf vertex, we simply check if it within a distance $\epsilon$ of the query and, if so, report it as an $\epsilon$-near neighbor. In practice, we use the radius of the vertex triples in Algorithm~\ref{alg:tree_construct} instead of $2^l$, as the vertex triple radius is an upper bound on the distance between a vertex and all its descendant leaves. Single point queries are presented as Algorithm~\ref{alg:query}. In our actual implementation, we amortize the query costs across a batch of queries by collecting tree vertices within the required distance to a batch of points.

\begin{algorithm}
\caption{Vertex splitting algorithm}
\label{alg:vertex_split}
\begin{algorithmic}[1]
\Require
\begin{itemize}[leftmargin=-5ex]
    \item Vertex triple $(H, \pi_1, r)$ for vertex $\pi_1^l$,
    \item Distance array $D$ pre-initialized with $D[p] \gets d(p, \pi_1)$ for all $p \in H$,
    \item $\pi_2 = \argmax_{p \in H}D[p]$ \Comment{Note that $r = D[\pi_2]$}
\end{itemize}
\Ensure
\begin{itemize}[leftmargin=-4ex]
    \item Child vertices $\pi_1^{l-1}, \ldots, \pi_m^{l-1}$,
    \item Vertex triples $(H_1, \pi_1, r_1), \ldots, (H_m, \pi_m, r_m)$,
    \item Distance arrays $D_1, \ldots, D_m$ with $D_i[p] = d(p, \pi_i)$ for all $p \in H_i$
\end{itemize}
\hspace{-8ex}with
\begin{itemize}[leftmargin=-3ex]
    \item[(1)] \textbf{Covering}. $H \subset \bigcup_{i=1}^m B(\pi_i, r/2)$,
    \item[(2)] \textbf{Separating}. $d(\pi_i, \pi_j) > r/2$ for all $i \neq j$.
\end{itemize}
\Function{SplitVertex}{$H$, $D$, $\pi_1$, $\pi_2$, $l$}
    \State $L[p] \gets \pi_1$ for all $p \in H$
    \BeginBox
    \State $m \gets 0$
    \Repeat
        \State $m \gets m + 1$
        \State $r^* \gets 0$
        %\State $D^{*}[p] \gets d(p, \pi_m)$ for all $p \in H$
        \For{$p \in H$}
            \State $d^* \gets d(p, \pi_m)$
            \If {$d^* < D[p]$}
                \State $D[p] \gets d^*$
                \State $L[p] \gets \pi_m$
            \EndIf
            \If {$D[p] > r^*$}
                \State $r^* \gets D[p]$
                \State $\pi_{m+1} \gets p$
            \EndIf
        \EndFor
    \Until{$D[\pi_{m+1}] \leq 2^{l-1}$} \Comment{$\pi_{m+1}$ violates separating condition}
    \EndBox
    \BeginBox
    \For{$i \gets 1..m$}
        \State $H_i \gets \{p \in H : L[p] = \pi_i\}$
        \State $D_i[p] \gets D[p]$ for all $p \in H_i$
        \State $\pi_1(H_i) \gets \argmax_{p \in H_i}D_i[p]$ \Comment{Note that $\pi_1(H_i)$ is not $\pi_1$}
        \State $r_i \gets D_i[\pi_1(H_i)]$
        \State $(H_i, \pi_i, r_i)$ is a vertex triple with distance array $D_i$ for new child vertex $\pi_1(H_i)^{l-1}$
    \EndFor
    \EndBox
\EndFunction
\end{algorithmic}
\end{algorithm}

\begin{algorithm}
\caption{Tree construction algorithm}
\label{alg:tree_construct}
\begin{algorithmic}[1]
\Require
\begin{itemize}[leftmargin=-5ex]
    \item Partially completed cover tree $T$ up to level $l$
    \item $\mathrm{Hubs} = \{H(i), \pi_{1}(i)^l, \pi_{2}(i)^l, r(i)\}_{1 \leq i \leq m}$ where $(H(i), \pi_{1}(i), r(i))$ is the vertex triple for $\pi_{1}(i)^l$
    \item Leaf vertex size $\zeta \geq 1$
\end{itemize}
\Ensure Cover tree $T$ with one more level completed
\Function{BuildLevel}{$T$, $\mathrm{Hubs}$, $\zeta$}
    \For {$i \gets 1..m$}
        \State $\{H_j(i)\}_{1 \leq j \leq m_i}$  \\
        $\gets$ \Call{SplitVertex}{$H(i)$, $D(i), \pi_{1}(i), \pi_{2}(i), l$}
        \State $\mathrm{Hubs} \gets \mathrm{Hubs} \setminus \{H(i)\}$
        \For{$j \gets 1..m_i$}
            \State Insert vertex $\pi_1(H_j(i))^{l-1} = B(\pi_1(H_{j}(i)), r_{j}(i))$ 
            \If{$|H_{j}(i)| > \zeta$}
                \State $\mathrm{Hubs} \gets \mathrm{Hubs} \cup \{H_{j}(i)\}$
            \Else
                \If{$|H_{j}(i)| > 1$}
                    \For {$p \in H_{j}(i)$}
                        \State Insert leaf vertex $B(p,0)$
                    \EndFor
                \EndIf
            \EndIf
        \EndFor
    \EndFor
\EndFunction
\end{algorithmic}
\end{algorithm}

\begin{algorithm}[h]
\caption{Cover tree query on point set $P$ (tree $T$)} 
\label{alg:query}
\begin{algorithmic}[1]
\Require query point $q$ and query radius $\epsilon$
\Ensure set of $\epsilon$-neighbors $B(q, \epsilon) \cap P$
\State $\mathrm{Stack} \gets \{\mathrm{root}\}$ \Comment{$\mathrm{root}$ is the tree root of $T$}
\While {$|\mathrm{Stack}| > 0$}
    \State $u \gets \mathrm{PopBack}(\mathrm{Stack})$
    \If {$u$ is a leaf and $d(q,u) \leq \epsilon$}
        \State $u$ is an $\epsilon$-neighbor of $q$
    \Else
        \For {$v \in \mathrm{children}(u)$}    
            \If {$d(q,v) \leq \mathrm{radius}(v) + \epsilon$}
                \State $\mathrm{PushBack}(\mathrm{Stack}, v)$
            \EndIf
        \EndFor
    \EndIf
    
\EndWhile

\end{algorithmic}
\end{algorithm}

\subsection{Systolic Near Neighbor Graph Construction}
\begin{algorithm}[h]
\caption{Systolic near neighbor graph construction} 
\label{alg:systolic}
\begin{algorithmic}[1]
\Require
\begin{itemize}[leftmargin=-5ex]
    \item $N$ processors
    \item local processor rank is $j$
    \item local point set is $P^{(j)}$
\end{itemize}
\Ensure local $\epsilon$-neighbors $\{E^{(j)} = (p,q) \in P^{(j)} \times P : d(p,q) \leq \epsilon\}$

\State Build local cover tree $T_j = T(P^{(j)})$
\For {$i \gets 0..N/2$}
    \BeginBox
    \State Send $P^{((i+j)\mod N)}$ to processor $(j-1+N) \mod N$
    \State Receive $P^{((i+j+1)\mod N)}$ from processor $(j+1) \mod N$
    \Comment{Overlapped with computation below}
    \EndBox
    \BeginBox
    \For {$p \in P^{((i+j) \mod N)}$}
        \State Query $p$ against $T_j$ to find $B(p,\epsilon)\cap P^{(j)}$
    \EndFor
    \EndBox
\EndFor
\end{algorithmic}
\end{algorithm}

\begin{algorithm}[h]
\caption{Coalesce and query Voronoi cells} 
\label{alg:inside}
\begin{algorithmic}[1]
\Require
\begin{itemize}[leftmargin=-5ex]
    \item $N$ processors
    \item local processor rank is $j$
    \item local point set is $P^{(j)}$
    \item Voronoi centers $C = \{c_1, \ldots, c_m\}$
    \item local Voronoi cell sections $V^{(j)}_1, \ldots, V^{(j)}_m$
\end{itemize}
\Ensure
\begin{itemize}[leftmargin=-4ex]
    \item Cell assignment function $f : C \to \{0,\ldots,N-1\}$
    \item assigned Voronoi centers $C^{(j)} = f^{-1}(j)$
    \item assigned Voronoi cells $\tilde{V}_1, \ldots, \tilde{V}_{m_j}$ where $m_j = |C^{(j)}|$
    \item intra-cell $\epsilon$-neighbors $\{(p,q) \in \tilde{V}_i \times \tilde{V}_i : d(p,q) \leq \epsilon \}$ for $i=1..m_j$
\end{itemize}
\State Compute assignment function $f$ \Comment{Multiway number partitioning}
\BeginBox
\For {$i \gets 1..m$}
    \State Send $V^{(j)}_i$ to processor $f(c_i)$
\EndFor
\For {$k \gets 0..N-1$}
    \For {$c_i \in C^{(j)}$}
        \State Receive $V^{(k)}_i$ from processor $k$
    \EndFor
\EndFor

\Comment{Implemented as a collective MPI Alltoallv}
\EndBox
\For {$c_i \in C^{(j)}$}
    \State Coalesce $\tilde{V}_i \gets \bigcup_{k=0}^{N-1}V^{(k)}_i$
    \State Build $T_i \gets T(\tilde{V_i})$
    \For {$p \in \tilde{V_i}$}
        \State Query $p$ against $T_i$ to find $B(p,\epsilon) \cap \tilde{V}_i$
    \EndFor
\EndFor
\end{algorithmic}
\end{algorithm}
\begin{algorithm}[h]
\caption{Collective ghost queries} 
\label{alg:landmark_coll}
\begin{algorithmic}[1]
\Require
\begin{itemize}[leftmargin=-5ex]
    \item $N$ processors
    \item local processor rank is $j$
    \item local point set is $P^{(j)}$
    \item assigned Voronoi trees $T_1, \ldots, T_{m_j}$
\end{itemize}
\Ensure
\begin{itemize}[leftmargin=-4ex]
    \item assigned ghost points $G^{\epsilon}_1, \ldots, G^{\epsilon}_{m_j}$
    \item intra-cell $\epsilon$-neighbors $\{(p,q) \in \tilde{V}_i \times G^{\epsilon}_i : d(p,q) \leq \epsilon\}$
\end{itemize}
\State Build replication tree $T_{rep} \gets T(C)$
\For{$p \in P^{(j)}$}
    \State Query $p$ against $T_{rep}$ to find $G(p) \gets B(p, d(p,C) + 2\epsilon) \cap C$
    \State $F(p) \gets \{ f(c_k) : c_k \in G(p) \setminus \{c_i\}\}$ \Comment{Assuming $p \in V_i$}
\EndFor
\BeginBox
\For{$p \in P^{(j)}$}
    \For{$k \in F(p)$}
        \State Send $p$ to processor $k$
    \EndFor
\EndFor
\For{$k \gets 0..N-1$}
    \For{$c_i \in C^{(j)}$}
    \State Receive $G^{\epsilon}_i \cap P^{(k)}$ from processor $k$
    \EndFor
\EndFor

\Comment{Implemented as a collective MPI Alltoallv}
\EndBox
\For{$c_i \in C^{(j)}$}
    \State Coalesce $G^\epsilon_i \gets \bigcup_{k=0}^{N-1}(G^\epsilon_i \cap P^{(k)})$
    \For{$p \in G^{\epsilon}_i$}
        \State Query $p$ against $T_i$ to find $B(p,\epsilon) \cap \tilde{V}_i$
    \EndFor
\EndFor

\end{algorithmic}
\end{algorithm}
Our first algorithm for near neighbor graph construction utilizes an approach inspired by the application of systolic arrays to molecular simulations \cite{heller90}.
Let $N$ be the number of processors available to us (in this case, MPI processes). For simplicity, assume that $n$ is a multiple of $N$. At the start of all our algorithms, the points in $P$ are equally partitioned across processors with $P^{(j)}$ denoting the initial subset of $n/N$ points available to processor $j$. 
In the systolic algorithm, each processor $j$ builds a cover tree $T^{(j)} = T(P^{(j)})$ in parallel.

Once completed, the query phase is initiated. Each processor $j$ is responsible for computing its $\epsilon$-neighbors $E^{(j)} = \{(p,q) \in P^{(j)} \times P : d(p,q) \leq \epsilon\}.$ To do so, each processor must query every $q \in P$ against its local cover tree $T^{(j)}.$ Anticipating a problem further down the line, it is clearly not desirable to all-gather the dataset $P$ to every processor. We instead pass chunks $P^{(j)}$ around the processors, and overlap each communication step with a querying step, avoiding the memory blowup.

In more detail, we run $N-1$ steps where on step $i$, processor $j$ sends $P^{((i+j) \mod N)}$ to processor $(j-1+N) \mod N$ and receives $P^{((i+j+1)\mod N)}$ from processor $(j+1)\mod N$. Overlapped with this communication step, processor $j$ queries $P^{((i+j)\mod N)}$ against $T^{(j)}$, thereby finding all $\epsilon$-neighbors of $P^{(j)}$ which intersect $P^{((i+j)\mod N)}$. After $N-1$ rounds, the full $\epsilon$-graph is computed, and we are done. In our case, half of these rounds are redundant since distances are symmetric. We therefore only need $N/2$ rounds to exhaust all query pairs $P^{(i)}$ versus $P^{(j)}$ for $i \leq j.$ The latency of this approach scales linearly with the number of processors. However, in virtually all cases distance computations are the time-limiting primitive, and therefore the parallelism achieved by distributing pairwise distance computations makes up for the latency issues in our scaling experiments. That said, communication times are still higher than desired, especially at lower processor counts. The pseudocode for the systolic algorithm is outlined in Algorithm~\ref{alg:systolic}.

\subsection{Landmarking Near Neighbor Graph Construction}

A major downside of the previous approach is that it ignores the global geometry of the dataset. It is smarter to avoid querying the full dataset against each local tree, and to instead coalesce points that are near each other in metric space, and only query points near enough to each other to be potential $\epsilon$-neighbors. To do this, we use a ``landmarking'' approach. At a high level, our algorithm does the following:

\begin{itemize}
    \item [(1)] \textbf{Build Voronoi diagram}: Select a set of $m$ ``centers'' (or landmarks) $C = \{c_1,\ldots, c_m\} \subset P$ where $m \ll n$ is a parameter which scales with the number of processors. $C$ is shared among all processors via an all-gather, and a distributed Voronoi diagram $V_1 \cup \cdots \cup V_m = P$ is constructed, where $V^{(j)}_i = V_i \cap P^{(j)}$ denotes the set of points in Voronoi cell $i$ on processor $j$.
    \item [(2)] \textbf{Coalesce Voronoi cells}: Compute a statically load-balanced assignment $f : [m] \to [N]$ of cells to processors and coalesce the cells on processors via an all-to-all collective. 
    \item [(3)] \textbf{Query intra-cell points}: Each processor builds a cover tree $T_i = T(V_i)$ on each of its assigned Voronoi cells, and then queries each $V_i$ against $T_i$ to find $\epsilon$-neighbors within each $V_i.$ This and the previous step are outlined in Algorithm~\ref{alg:inside}.
    \item [(4)] \textbf{Find and query ghost points}: For each $p \in P$, find all $V_i$ for which it is possible that $p$ has $\epsilon$-neighbors in $V_i$, but is not itself in $V_i$. We call $p$ an $\epsilon$-\textbf{ghost} for $V_i$ if this is true (see Figure~\ref{fig:ghost_points}). Letting $G^{\epsilon}_i$ denote the set of all $\epsilon$-ghosts of $V_i$, we then query all points in $G^{\epsilon}_i$ against $T_i$, and then we are done. This step is outlined in Algorithm~\ref{alg:landmark_coll}.
\end{itemize}

The first three steps are relatively straightforward, with a few caveats. There are many ways to go about selecting the set of centers $C$. One method is to compute the length-$m$ prefix of a greedy permutation\cite{gonzalez85} on $P$. In this case the center set is an $r$-net for $P$, which may appear desirable. However, in high-dimensional settings, or settings with skewed spatial distributions and/or many duplicate points, the greedy permutation strategy can create a highly imbalanced Voronoi partitioning, in terms of the cell sizes \cite{sheehy20onehop}.
A much more reliable approach, which has outperformed greedy permutations on a vast majority of our experiments, is to choose the centers at random.

To compute the assignment function $f$ for the coalescing phase, there are a few competing factors. Using a simple cyclic distribution works, but is not sufficiently sensitive to the imbalance in Voronoi cell sizes and therefore is undesirable to achieve higher parallelism.
A better method is to solve the \textbf{multiway number partitioning} problem on the cell sizes\cite{graham66}. This optimization algorithm minimizes the difference between the maximum and minimum number of points assigned to a processor. The problem is NP-complete, but a $4/3$-approximation can be computed in $O(m\log{m})$ time. This method improves load balancing and is the one used in this paper.

\subsection{Ghost Points}
To find ghost points, we use the following lemma:
\begin{lemma}
\label{ghostlemma}
If $p \in V_j$ has an $\epsilon$-neighbor in $V_i$, where $i \neq j$, then $d(p,c_i) \leq d(p,c_j) + 2\epsilon.$
\end{lemma}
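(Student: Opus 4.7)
The plan is to chase the triangle inequality twice, using only the defining property of the Voronoi cell $V_i$ and the hypothesis that $p$ has an $\epsilon$-neighbor there. Concretely, let $q \in V_i$ be the witness with $d(p,q) \leq \epsilon$. I want to bound $d(p, c_i)$ from above by something involving $d(p, c_j)$ and $\epsilon$.

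First I would pass from $p$ to $q$ using the triangle inequality to get $d(p, c_i) \leq d(p, q) + d(q, c_i) \leq \epsilon + d(q, c_i)$. Next, because $q$ lies in $V_i$, the Voronoi definition gives $d(q, c_i) = d(q, C) \leq d(q, c_j)$; this is the one place the hypothesis ``$q \in V_i$'' enters. Finally, I would bound $d(q, c_j)$ by a second triangle inequality routed back through $p$: $d(q, c_j) \leq d(q, p) + d(p, c_j) \leq \epsilon + d(p, c_j)$. Chaining the three inequalities yields $d(p, c_i) \leq 2\epsilon + d(p, c_j)$, as required.

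There is no real obstacle here; the only ``choice'' in the proof is which intermediate points to route the triangle inequality through, and the symmetric pattern ($p \to q \to c_i$ paired with $q \to p \to c_j$) is essentially forced by the statement. The role of the hypothesis $i \neq j$ is only to make the Voronoi comparison $d(q, c_i) \leq d(q, c_j)$ nontrivial; if $i = j$ the bound is vacuous since the left-hand side equals the right-hand side minus $2\epsilon$. I would also briefly remark on why this is the correct search radius for Algorithm~\ref{alg:landmark_coll}: any point $p$ that could contribute a cross-cell edge to $V_i$ must have $c_i$ lying within distance $d(p, c_j) + 2\epsilon = d(p, C) + 2\epsilon$ of $p$, which is exactly the query radius used when searching $T_{rep}$ to discover ghost destinations.
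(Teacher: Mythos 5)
Your proof is correct and is essentially identical to the paper's: both chain $d(p,c_i) \leq d(p,q) + d(q,c_i)$, the Voronoi property $d(q,c_i) \leq d(q,c_j)$, and a second triangle inequality $d(q,c_j) \leq d(q,p) + d(p,c_j)$, then bound $d(p,q)$ by $\epsilon$. No differences worth noting.
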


\begin{proof}
    Let $q \in V_i$ be an $\epsilon$-neighbor of $p \in V_j.$ Then
    \begin{align*}
        d(p,c_i) &\leq d(q,c_i) + d(p,q) \\
                 &\leq d(q,c_j) + d(p,q) \\
                 &\leq d(p,c_j) + 2d(p,q) \\
                 &\leq d(p,c_j) + 2\epsilon,
    \end{align*} 
    by triangle inequality and the fact that $d(q,c_i) \leq d(q,c_j)$ and $d(p,q) \leq \epsilon.$
\end{proof}

Using Lemma~\ref{ghostlemma}, we define the set of $\epsilon$-ghosts for $V_i$ by $$G^{\epsilon}_i = \{p \in P \setminus V_i : d(p,c_i) \leq d(p,C) + 2\epsilon\},$$ since $d(p,C) = d(p,c_j)$ when $p \in V_j.$
There are two ways we handle ghost point queries. The first is to have each processor compute ghost points on its local point set $P^{(j)}$, and communicate them to destinations using an all-to-all collective. A difficulty with this approach is that ghost points can be a large fraction of the total points, leading to an undesirable bottleneck in the scaling performance of the all-to-all.
The second is to make use of the ring pipeline described in the previous section. In this case, we send points around in a ring and determine only which local Voronoi cells each point is a ghost for, and query directly. While less efficient when there are few ghost points, this method avoids the memory and scaling issues of the collective approach.
\subsubsection{Collective Ghost Queries}
In the collective-based regime, we compute the sets $G^{\epsilon}_i$ in parallel, with each processor querying its initial point set $P^{(j)}$ against a ``replication tree'' $T_{rep} = T(C).$ Recall that $C$ is shared by all processors, and that $m \ll n$, therefore the time required to have each processor redundantly build $T(C)$ locally is negligible. For each query of $p \in V^{(j)}$ against $T_{rep}$, we find all $(d(p,C) + 2\epsilon)$-neighbors, where $d(p,C)$ is already known locally from the Voronoi diagram construction phase. For each $c_i \in B(p,d(p,C) + 2\epsilon) \setminus V(p)$, where $V(p)$ denotes the Voronoi cell that $p$ belongs to, $p \in G^{\epsilon}_i$, so we need to send $p$ to the processor responsible for $V_i$ under assignment $f$.

\subsubsection{Ring Ghost Queries}
In the ring-based regime, we pass points around the ring as in the first algorithm we discussed. In this case, on step $i$ processor $j$ queries $P^{((i+j)\mod N)}$ against $T^{(j)} = T(C \cap P^{(j)})$, so that for each $p \in P^{((i+j)\mod N)}$ we find all $c_k \in B(p,d(p,C)+2\epsilon) \cap P^{(j)}.$ We then look up the local tree $T_k$ and query it to find its $\epsilon$-neighbors. 

\begin{figure}
    \centering
    \includegraphics[width=1\linewidth]{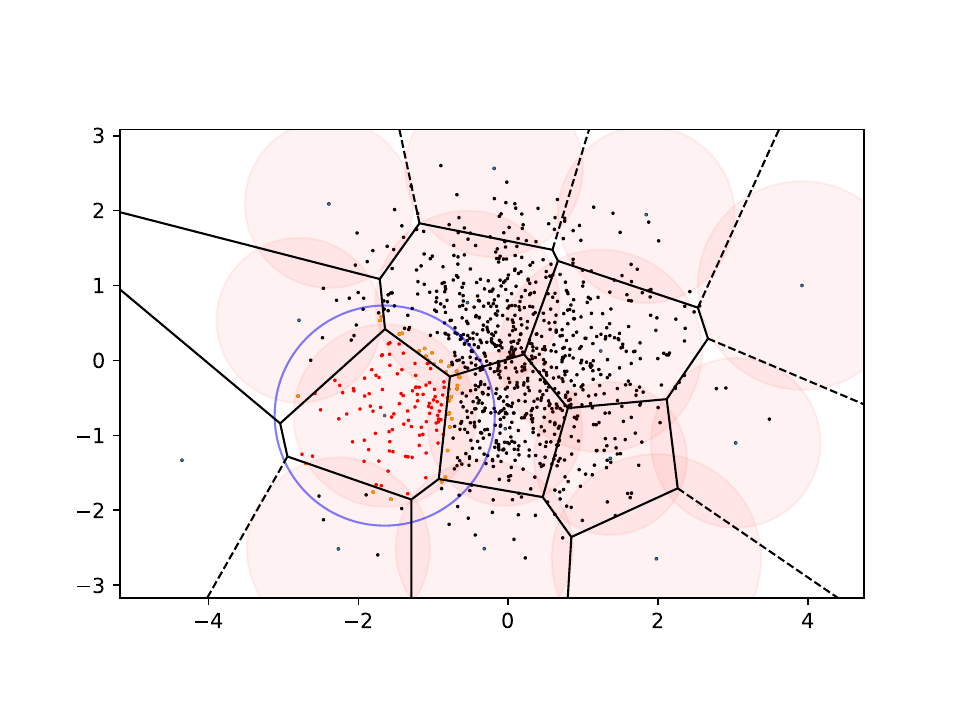}
    \caption{Illustration of $\epsilon$-ghost points in 2D Euclidean space.}
    \label{fig:ghost_points}
    \small
    The convex polygons each correspond to a Voronoi cell. Each red circle represents a ball of radius $r_i$ centered at a Voronoi site $p_i.$ The orange points are $\epsilon$ ghost points for the Voronoi cell whose points are all red, and the blue circle is a ball of radius $r_i + \epsilon$ for the $V_i$ of interest.
\end{figure}

\section{Experiments}

\subsection{Environment}
All experiments were run using NERSC Perlmutter, a Cray HPE EX supercomputer. We used CPU nodes, each of which contains two AMD EPYC 7763 (Milan) processors, with 64 physical cores each. CPU nodes communicate using an HPE Slingshot 11 interconnect. Code was compiled with GCC 13.2.1, using the Cray MPICH implemention of MPI. For all experiments with $128$ MPI ranks or less, we utilized a single compute node. All larger experiments used exactly $128$ MPI ranks per compute node.

\begin{table}[h]
\caption{Datasets}
\centering
\label{tab:datasets}
\resizebox{\columnwidth}{!}{
\begin{tabular}{|cccc|rrr|}

\toprule
Dataset & Metric & Dimension & Points & $\epsilon$-radius & Edges & Avg. neighbors \\
\midrule
\multirow{3}*{\textbf{faces}} & \multirow{3}*{Euclidean} & \multirow{3}*{$20$} & \multirow{3}*{$10304$} & $50$ & $312.7$K & $30.34$ \\
& & & & $100$ & $4.5$M & $436.09$ \\
& & & & $150$ & $17.2$M & $1666.84$ \\
\midrule
\multirow{3}*{\textbf{artificial40}} & \multirow{3}*{Euclidean} & \multirow{3}*{$40$} & \multirow{3}*{$10000$} & $6$ & $112.6$K & $11.26$ \\
& & & & $7$ & $2.5$M & $254.59$ \\
& & & & $8$ & $18.8$M & $1880.145$ \\
\midrule
\multirow{3}*{\textbf{corel}} & \multirow{3}*{Euclidean} & \multirow{3}*{$32$} & \multirow{3}*{$68040$} & $0.1$ & $1.6$M & $24.04$ \\
& & & & $0.125$ & $3.9$M & $57.37$ \\
& & & & $0.15$ & $9.0$M & $132.44$ \\
\midrule
\multirow{3}*{\textbf{deep}} & \multirow{3}*{Euclidean} & \multirow{3}*{$96$} & \multirow{3}*{$10000$} & $0.8$ & $164.1$K & $16.41$ \\
& & & & $1.0$ & $1.4$M & $136.74$ \\
& & & & $1.2$ & $9.6$M & $962.09$ \\
\midrule
\multirow{3}*{\textbf{covtype}} & \multirow{3}*{Euclidean} & \multirow{3}*{$55$} & \multirow{3}*{$581012$} & $150$ & $56.2$M & $96.70$ \\
& & & & $200$ & $157.4$M & $270.85$ \\
& & & & $250$ & $372.9$M & $641.845$ \\
\midrule
\multirow{3}*{\textbf{twitter}} & \multirow{3}*{Euclidean} & \multirow{3}*{$78$} & \multirow{3}*{$583250$} & $2$ & $3.9$M & $6.73$ \\
& & & & $4$ & $34.6$M & $59.29$ \\
& & & & $6$ & $254.3$M & $436.04$ \\
\midrule
\multirow{3}*{\textbf{sift}} & \multirow{3}*{Euclidean} & \multirow{3}*{$128$} & \multirow{3}*{$1000000$} & $125$ & $10.2$M & $10.24$ \\
& & & & $175$ & $71.4$M & $71.41$ \\
& & & & $225$ & $479.9$M & $479.86$ \\
\midrule
\multirow{3}*{\textbf{sift-hamming}} & \multirow{3}*{Hamming} & \multirow{3}*{$256$} & \multirow{3}*{$988258$} & $20$ & $71.1$M & $26.77$ \\
& & & & $30$ & $163.0$M & $164.92$ \\
& & & & $40$ & $648.6$M & $656.29$ \\
\midrule
\multirow{3}*{\textbf{word2bits}} & \multirow{3}*{Hamming} & \multirow{3}*{$800$} & \multirow{3}*{$399000$} & $200$ & $7.7$M & $19.38$ \\
& & & & $250$ & $128$M & $320.68$ \\
& & & & $300$ & $2.1$B & $5186.16$ \\
\bottomrule
\end{tabular}
}
\end{table}
\subsection{Datasets}
All datasets used in our work are listed in Table~\ref{tab:datasets}. This table includes the $\epsilon$ values utilized in our experiments and the corresponding sparsity of the output graph, measured by the average degree of a vertex. Euclidean metric datasets \textbf{faces}, \textbf{artificial40}, \textbf{corel}, \textbf{covtype}, and \textbf{twitter} were sourced from the MLPACK benchmarking suite\cite{curtin13a}. The \textbf{deep} dataset came from NeurIPS '21 challenge on billion-scale approximate nearest neighbor search~\cite{Simhadri2022ResultsOT}. The \textbf{sift} dataset~\cite{jegou11} is our largest Euclidean dataset.
For non-Euclidean results, we used the Hamming metric. The datasets \textbf{sift-hamming} and \textbf{word2bits} were taken from an approximate nearest neighbor benchmarking study~\cite{aumuller20}. As stated in their paper, \textbf{sift-hamming} was generated by embedding the \textbf{sift} dataset into Hamming space using a spatially aware embedding~\cite{lee15}. The \textbf{word2bits} dataset was generated by embedding the top 400,000 words of the 2017 English Wikipedia into an $800$-dimensional Hamming space~\cite{word2bits}.
Three $\epsilon$ values were selected for each dataset. The values chosen were intended to sweep from a super sparse graph with average degree less than $100$ to a denser graph with hundreds or thousands of neighbors on average. The denser graphs (in a relative sense) are noteworthy for they allow more accurate density estimation of geometric spaces, which are difficult to extract from a $k$-nearest neighbor graph.

\subsection{Strong Scaling}

\begin{figure}[hbtp]
\centering
\includegraphics[width=1\linewidth]{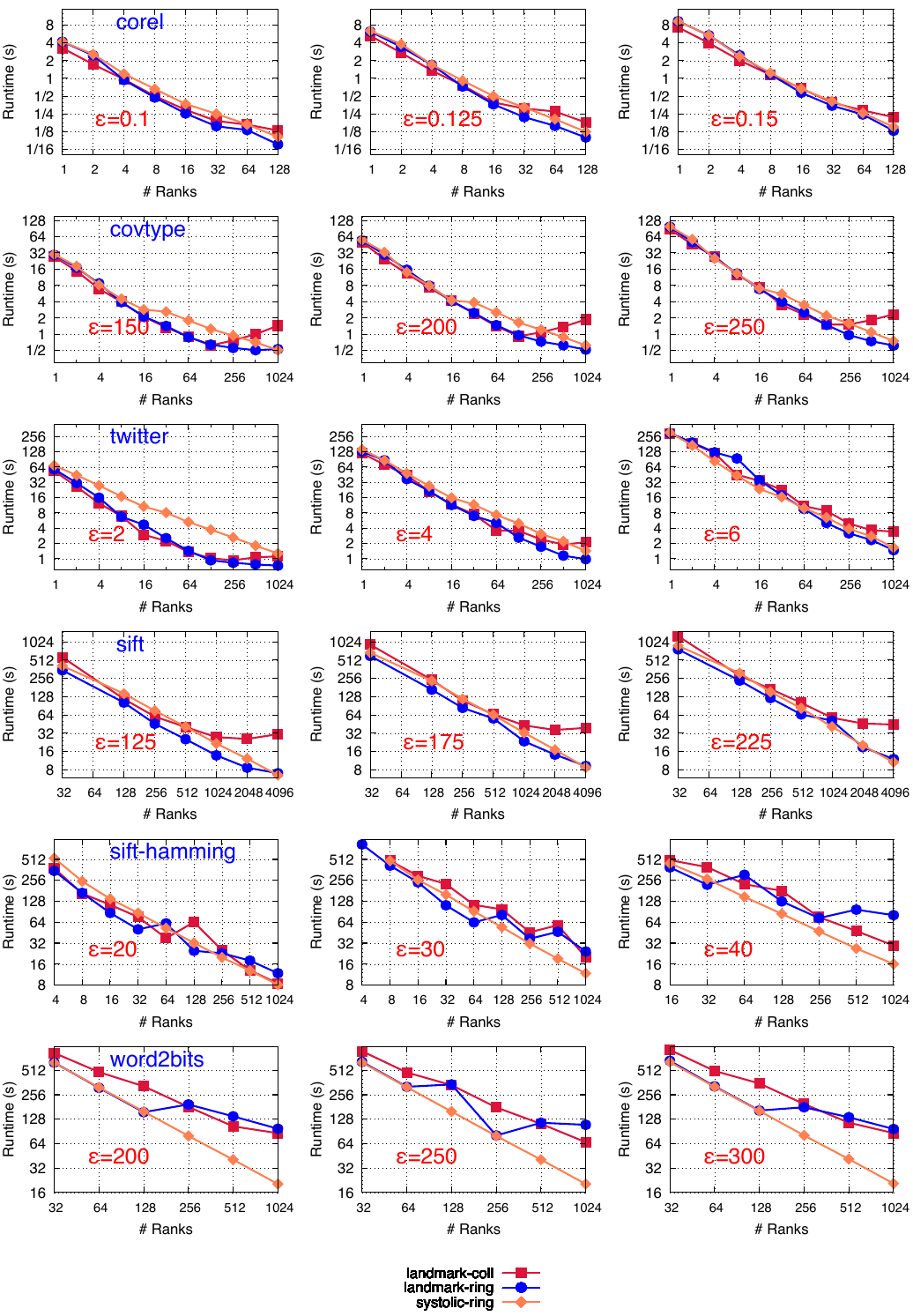}
\caption{Strong Scaling}
\label{scaling}
\end{figure}

For strong scaling experiments, we ran all three algorithms across all datasets and $\epsilon$ parameters. The full results are shown in Figure~\ref{scaling}. The systolic algorithm is labeled \textbf{systolic-ring}, the landmarking algorithm using All-to-all ghost queries is labeled \textbf{landmark-coll}, and the landmarking algorithm using ring ghost queries is labeled \textbf{landmark-ring}. For $\textbf{faces}$, $\textbf{artificial40}$, $\textbf{corel}$, and $\textbf{deep}$ datasets, we used a single compute node, scaling up from $1$ to $128$ processes. Scaling past $1$ node on these datasets is unnecessary for they are all relatively small. For \textbf{covtype} and \textbf{twitter} datasets, we scaled up to $8$ compute nodes (from $1$ to $1024$ processes). For the larger $\textbf{sift}$ dataset ($128$ dimensions and one million points), we ran between $1$ and $32$ compute nodes, with $128$ processes per node. We also ran $32$ processes on a single node. For \textbf{sift-hamming} and \textbf{word2bits}, between $4$-$1024$ and $32-1024$ processors were run, respsectively.

For many of the experiments, \textbf{landmark-coll} starts out superior, but its performance degrades. This can be seen clearly in the \textbf{sift} and \textbf{covtype} experiments, for example. We note that the systolic algorithm tends to eventually converge or beat the best performing landmarking algorithm as we keep increasing the process count. This is expected as all points become part of the ghost regions in high dimensional datasets when process counts increase beyond a certain threshold. However, for many real scenarios, one does not run each dataset in the highest possible concurrency. In those scenarios, landmark bases algorithm provide a good tradeoff between performance and scalability. For example, in the twitter dataset, the systolic algorithm can be an order of magnitute slower than landmark based algorithms in medium process counts. 

\begin{table*}[hbtp]
    \centering
    \caption{
    \textbf{Speedups over SNN}
    This table shows speedups over the sequential SNN at processor counts $N=1,32,1024,4096$. The main takeaway is that the landmarking algorithms are more
efficient at lower processor counts.
    }
    \label{tab:snn_takeover}
    \resizebox{1\textwidth}{!}{
    \begin{tabular}{lcc|ccc|ccc|ccc|ccc}
        \toprule
        \textbf{Dataset} & \textbf{$\epsilon$} & \textbf{SNN} (s) & \multicolumn{3}{c}{\textbf{1-rank speedup}} & \multicolumn{3}{c}{\textbf{32-rank speedup}} & \multicolumn{3}{c}{\textbf{1024-rank speedup}} & \multicolumn{3}{c}{\textbf{4096-rank speedup}} \\
        \midrule
        & & & landmark-coll & landmark-ring & systolic-ring & landmark-coll & landmark-ring & systolic-ring & landmark-coll & landmark-ring & systolic-ring & landmark-coll & landmark-ring & systolic-ring \\
        \midrule
        covtype & 150 & 131.480 & \textbf{4.83} & 4.47 & 4.37 & \textbf{98.34} & 92.43 & 50.36 & 91.16 & 247.84 & \textbf{258.10} & & & \\
        covtype & 200 & 175.037 & \textbf{3.44} & 3.20 & 3.14 & \textbf{72.31} & 70.43 & 45.23 & 93.53 & \textbf{336.93} & 280.94 & & & \\
        covtype & 250 & 217.891 & \textbf{2.50} & 2.24 & 2.20 & \textbf{62.42} & 54.64 & 39.08 & 92.99 & \textbf{352.23} & 289.93 & & & \\
        \midrule
        twitter & 2 &  49.086 & \textbf{0.91} & 0.84 & 0.71 & \textbf{21.89} & 19.29 & 6.06 & 43.97 & \textbf{67.01} & 38.77 & & & \\
        twitter & 4 &  86.739 & \textbf{0.73} & 0.67 & 0.61 & 11.55 & \textbf{12.39} & 7.42 & 40.90 & \textbf{88.46} & 60.99 & & & \\
        twitter & 6 & 125.973 & \textbf{0.43} & 0.42 & 0.41 &  5.50 & 7.05 & \textbf{7.63}  & 36.80 & \textbf{85.38} & 75.86 & & & \\
        \midrule
        sift & 125 & 15374.763 & & & & 26.91 & \textbf{44.21} & 37.37 & 555.33 & \textbf{1106.88} & 708.04 & 496.86 & 2179.99 & \textbf{2339.28} \\
        sift & 175 & 16030.889 & & & & 17.47 & \textbf{26.77} & 23.69 & 372.16 &  \textbf{678.34} & 491.61 & 407.96 & 1727.91 & \textbf{1827.89} \\
        sift & 225 & 17071.052 & & & & 13.48 & \textbf{22.23} & 19.37 & 296.04 &  330.34 & \textbf{415.13} & 383.11 & 1414.85 & \textbf{1590.99} \\
        \bottomrule
    \end{tabular}
    }
\end{table*}

To show the utility of the landmarking algorithms when compute resources are limited, we show in Table~\ref{tab:snn_takeover} the speedups obtained over the SOTA sequential algorithm for our largest Euclidean datasets. This table demonstrates that for $1024$ and fewer processes, the landmarking algorithms are often superior. That said, it is clear from the table that \textbf{systolic-ring} comes out on top for \textbf{sift} at $4096$ processes.

%To test this scenario, we ask the question: ``how many ranks are needed to beat the SOTA sequential algorithm SNN by 5x''. These results are reported in Table\ref{tab:snn_takeover}. 

\begin{figure*}
    \centering
    \includegraphics[width=1\linewidth]{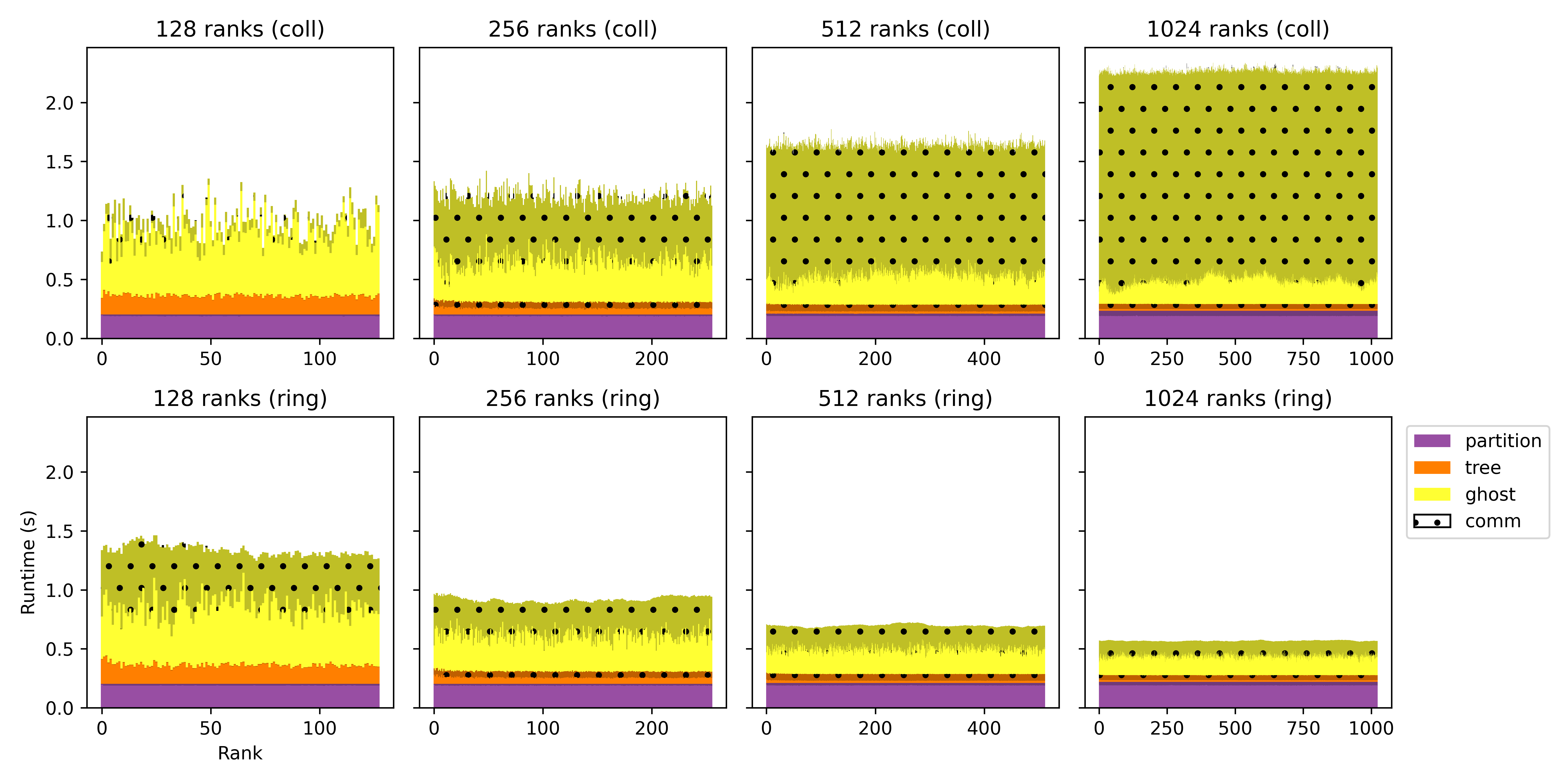}
    \caption{\textbf{covtype} landmark algorithm breakdowns: communication/synchronization times overlayed in darker colors with dotted pattern for each phase, Voronoi partitioning, tree coalescing and querying, and ghost computation and querying. Process counts where communication time begins to dominate the \textbf{landmark-coll} ghost query phase are shown in the top row, and the corresponding results using \textbf{landmark-ring} are shown in second row.}
    \label{fig:covtype_imbalance}
\end{figure*}
\begin{figure*}
    \centering
    \includegraphics[width=1\linewidth]{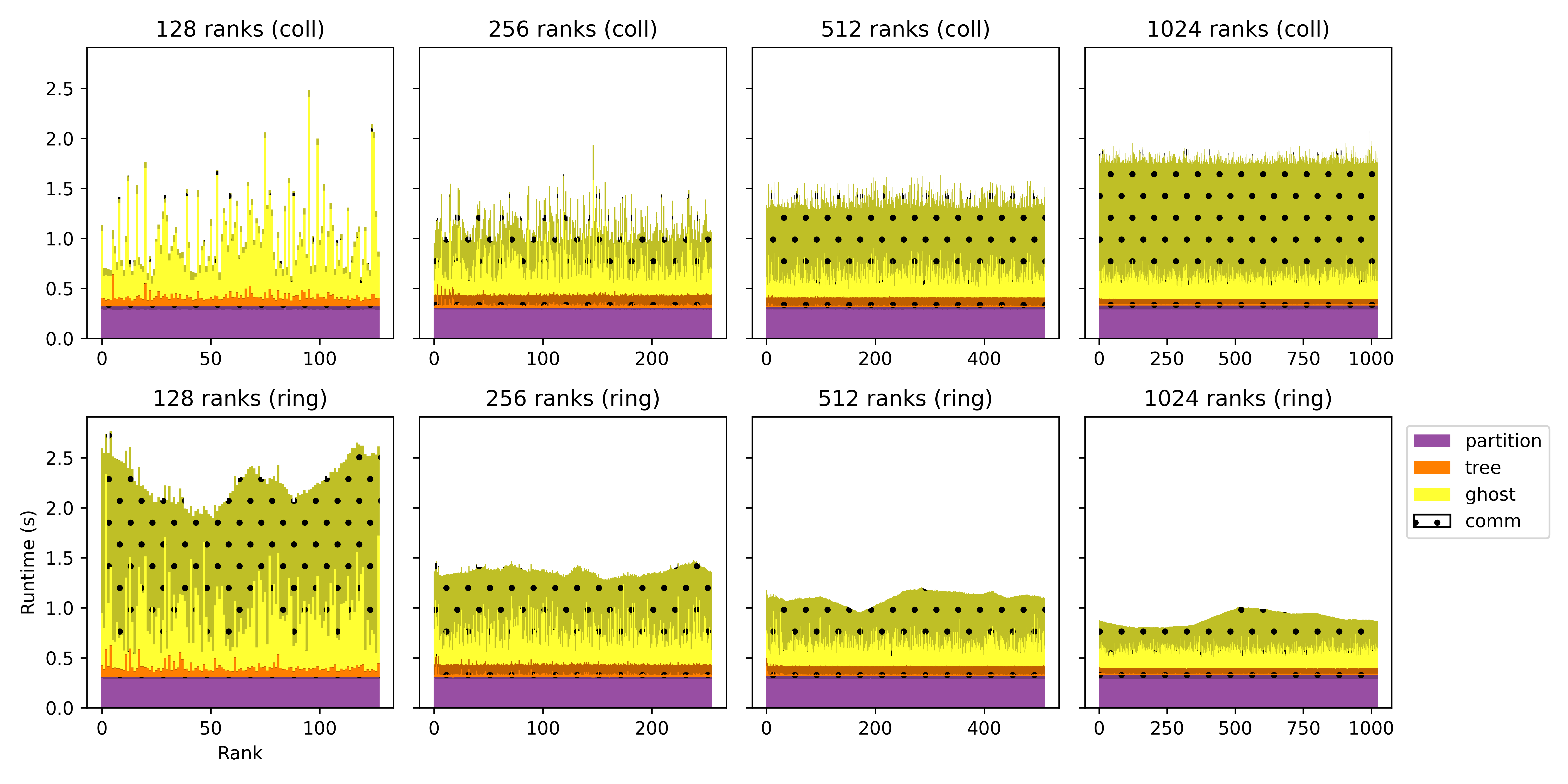}
    \caption{\textbf{twitter} landmark algorithm breakdowns: same presentation as Figure~\ref{fig:covtype_imbalance}.}
    \label{fig:twitter_imbalance}
\end{figure*}
\begin{figure*}
    \centering
    \includegraphics[width=1\linewidth]{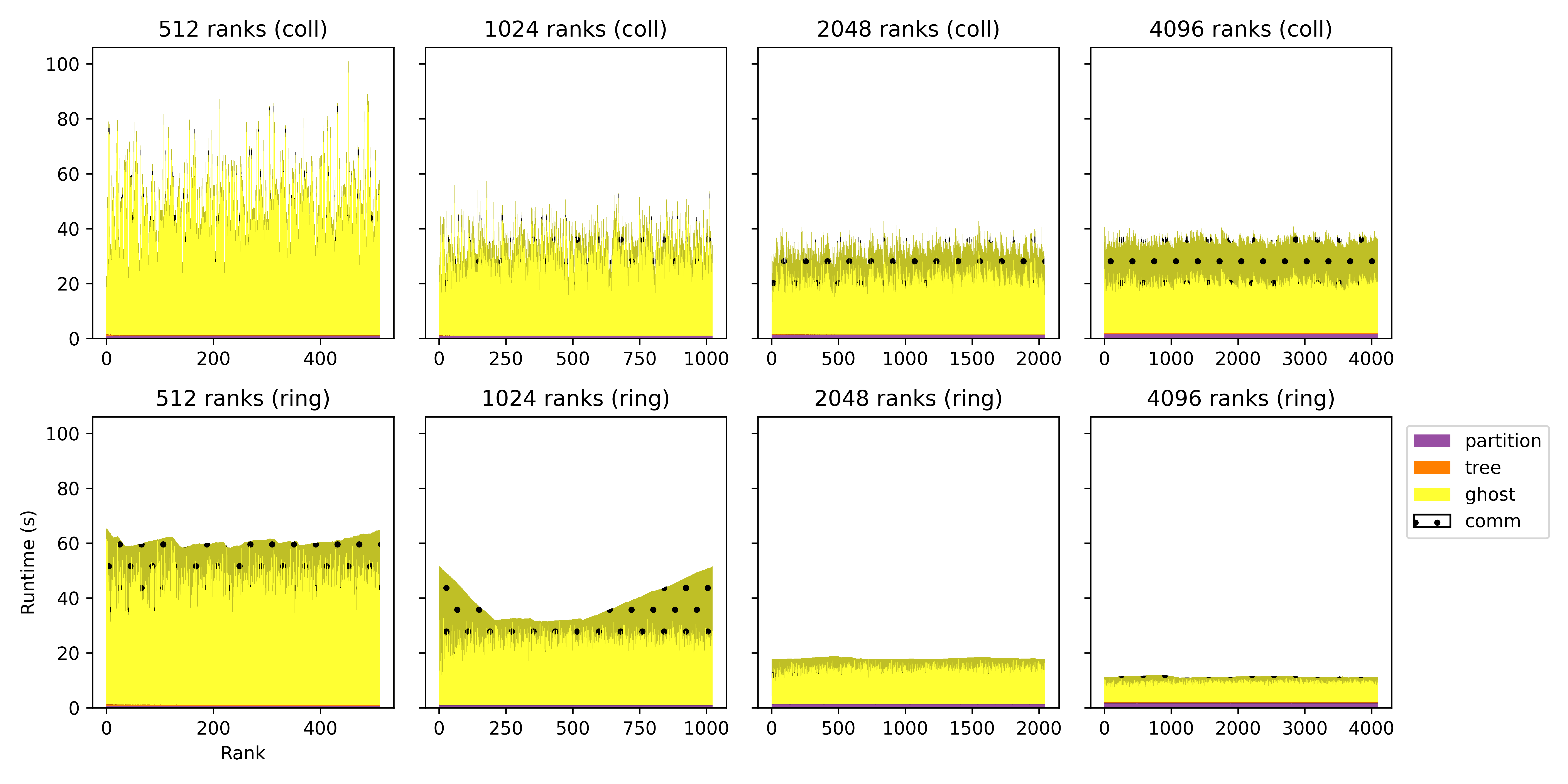}
    \caption{\textbf{sift} landmark algorithm breakdowns: same presentation as Figures~\ref{fig:covtype_imbalance} and~\ref{fig:twitter_imbalance}.}
    \label{fig:sift_imbalance}
\end{figure*}

In Figures \ref{fig:covtype_imbalance}, \ref{fig:twitter_imbalance} and \ref{fig:sift_imbalance}, we present plots that demonstrate the runtimes of the three main phases of our landmarking algorithms: (1) Voronoi partitioning (partition), (2) tree coalescence, construction, and querying (tree), and (3) ghost point determination and querying (ghost). The communication/synchronization time for each phase is highlighted in dark with a dotted background. The runtimes of all ranks are plotted next to each other, making the load imbalance of the algorithm visible. For each figure, the plots on the top use collective ghost queries and the plots on the bottom use ring ghost queries. The main takeaway is that for \textbf{landmark-coll} communication eventually dominates all other phases, motivating the need for a more efficient communication strategy. The \textbf{landmark-ring} algorithm, by contrast, shows that while the communication takes a larger percentage of the runtimes at lower processor counts, they do not blowup as we scale up.

\begin{table*}[hbtp]
    \centering
    \caption{
    \textbf{SNN Direct Comparisons}
    \small
    All entries in this table are runtimes in seconds. SNN was run using its batch query mode. The columns labelled $m=10$ and $m=60$ are runtimes for a single MPI process running the landmark-coll algorithm with $10$ and $60$ Voronoi cells, respectively.}
    \label{tab:direct_comparisons}
    \resizebox{1\textwidth}{!}{
    \begin{tabular}{cccc|cccc|cccc|cccc|cccc|cccc|cccc}
        \toprule
        \multicolumn{4}{c}{\textbf{faces}} & \multicolumn{4}{c}{\textbf{artificial40}} & \multicolumn{4}{c}{\textbf{corel}} & \multicolumn{4}{c}{\textbf{deep}} & \multicolumn{4}{c}{\textbf{covtype}} & \multicolumn{4}{c}{\textbf{twitter}}  & \multicolumn{4}{c}{\textbf{sift}}\\
        \midrule
        $\epsilon$ & SNN & $m=10$ & $m=60$ & $\epsilon$ & SNN & $m=10$ & $m=60$ & $\epsilon$ & SNN  & $m=10$  & $m=60$  & $\epsilon$ & SNN  & $m=10$  & $m=60$  & $\epsilon$ & SNN  & $m=10$  & $m=60$  & $\epsilon$ & SNN  & $m=10$  & $m=60$ & $\epsilon$ & SNN & $m=10$ & $m=60$ \\
        \midrule
         50 &         0.368  & 0.152 & \textbf{0.138} & 6 & \textbf{0.827} & 1.450 & 1.355 & 0.1   &  8.873 & 3.423 & \textbf{2.791} & 0.8 & \textbf{1.294} & 3.879 & 3.562 & 150 & 131.480 & 27.308 & \textbf{25.020} & 2 & 49.086 & 53.887 & \textbf{36.654} & 125 & 15374.763 & \textbf{9679.701} & N/A \\
        100 &         0.510  & 0.439 & \textbf{0.369} & 7 & \textbf{0.933} & 1.521 & 1.487 & 0.125 & 10.488 & 4.971 & \textbf{4.111} & 1.0 & \textbf{1.355} & 4.064 & 3.724 & 200 & 175.037 & 50.339 & \textbf{44.721} & 4 & \textbf{86.739}  & 119.540 & 87.375 & 175 & \textbf{16030.889} & 16932.973 & N/A \\
        150 & \textbf{0.684} & 0.777 & 0.691 & 8 & \textbf{1.123} & 2.173 & 2.089 & 0.15  & 12.005 & 7.086 & \textbf{6.036} & 1.2 & \textbf{1.526} & 4.682 & 4.261 & 250 & 217.891 & 87.617 & \textbf{77.560} & 6 & \textbf{125.973} & 294.492 & 186.772 & 225 & 17071.052 & N/A & N/A \\
        \bottomrule
    \end{tabular}
    }
\end{table*}

%\begin{table*}[hbtp]
%    \centering
%    \caption{
%    \textbf{SNN 5x Speedup}
%    All entries are runtimes in seconds.
%    The SNN/5 column is the SNN runtime divided by $5$ (reference for 5x speedup). The last three columns give the processor count below and above the time at which the given algorithm overtakes the SNN/5 reference.
%    }
%    \label{tab:snn_takeover}
%    \resizebox{0.8\textwidth}{!}{
%    \begin{tabular}{lccccc}
%        \toprule
%        \textbf{Dataset} & \textbf{$\epsilon$} & \textbf{SNN/5} & \textbf{landmark-coll-5x} & \textbf{landmark-ring-5x} & \textbf{systolic-ring-5x} \\
%        \midrule
%        corel & 0.100 & 1.775 & 1--2 & 2--4 & 2--4 \\
%        corel & 0.125 & 2.098 & 2--4 & 2--4 & 2--4 \\
%        corel & 0.150 & 2.401 & 2--4 & 4--8 & 2--4 \\
%        \midrule
%        covtype & 150 & 26.296 & 1--2 & 1--2 & 1--2 \\
%        covtype & 200 & 35.007 & 1--2 & 1--2 & 1--2 \\
%        covtype & 250 & 43.578 & 2--4 & 2--4 & 2--4 \\
%        \midrule
%        twitter & 2 & 9.817 & 4--8 & 4--8 & 16--32 \\
%        twitter & 4 & 17.347 & 8--16 & 8--16 & 16--32 \\
%        twitter & 6 & 25.195 & 16--32 & 16--32 & 8--16 \\
%        \bottomrule
%    \end{tabular}
%    }
%\end{table*}

\subsection{Shared-Memory Cover Tree Performance Comparisons}
One of the contributions of this paper is a software implementation of the cover tree for batch construction and batch querying. Furthermore, our problem of interest is fixed-radius near-neighbor graph construction. Surprisingly, few papers focus on optimizing this specific problem, with most choosing to sacrifice exactness for approximate solutions and/or $k$-nearest neighbor search. The \texttt{radius\_neighbors\_graph} function implemented in scikit-learn\cite{scikit}, uses both kd trees and ball trees. Recent work by \cite{chen2024fast} demonstrated superior performance in both cases in a variety of Euclidean datasets. This work is a useful baseline to compare with our cover tree since it is optimized to solve the same problem we focus on in our paper (fixed-radius search), unlike most other baselines we considered in the nearest neighbor literature.
The direct comparison results with their work (called SNN) are in Table~\ref{tab:direct_comparisons}. We used the landmarking method with collective ghost queries as this paper's representative, run with one rank, as it tends to perform better when there are fewer processes. We stress that SNN requires Euclidean data, as it relies on principal components to sort and filter out potential neighbors provably outside the $\epsilon$-ball of each query. Our work only assumes the metric properties of distance functions.
This enables flexibilty, such as when we want to use very different metrics like hamming distance or edit distance which cannot be optimized via matrix multiplication, but is not the most efficient way to compute Euclidean distances. 
\section{Conclusion}
In this paper we introduced distributed-memory algorithms for construction fixed-radius $\epsilon$-graphs.
We did so using cover trees, for which we implemented efficient batched construction and querying algorithms. We presented a systolic algorithm that shows good scaling, but which is not always the most efficient solution, particularly when (1) there are fewer processes available and (2) the datasets have low intrinsic dimensionality. To handle the latter issues, we implemented a landmarking algorithm using ghost regions, which shows improvements on many datasets at lower processor counts. All of our algorithms are scalable and can be used to build correct near-neighbor graphs with billions of edges, in a short period of time.

\bibliographystyle{IEEEtran}
\bibliography{main.bib}

% Generated by IEEEtran.bst, version: 1.14 (2015/08/26)
\begin{thebibliography}{10}
\providecommand{\url}[1]{#1}
\csname url@samestyle\endcsname
\providecommand{\newblock}{\relax}
\providecommand{\bibinfo}[2]{#2}
\providecommand{\BIBentrySTDinterwordspacing}{\spaceskip=0pt\relax}
\providecommand{\BIBentryALTinterwordstretchfactor}{4}
\providecommand{\BIBentryALTinterwordspacing}{\spaceskip=\fontdimen2\font plus
\BIBentryALTinterwordstretchfactor\fontdimen3\font minus \fontdimen4\font\relax}
\providecommand{\BIBforeignlanguage}[2]{{%
\expandafter\ifx\csname l@#1\endcsname\relax
\typeout{** WARNING: IEEEtran.bst: No hyphenation pattern has been}%
\typeout{** loaded for the language `#1'. Using the pattern for}%
\typeout{** the default language instead.}%
\else
\language=\csname l@#1\endcsname
\fi
#2}}
\providecommand{\BIBdecl}{\relax}
\BIBdecl

\bibitem{tenenbaum00}
\BIBentryALTinterwordspacing
J.~B. Tenenbaum, V.~de~Silva, and J.~C. Langford, ``A global geometric framework for nonlinear dimensionality reduction,'' \emph{Science}, vol. 290, no. 5500, pp. 2319--2323, 2000. [Online]. Available: \url{https://www.science.org/doi/abs/10.1126/science.290.5500.2319}
\BIBentrySTDinterwordspacing

\bibitem{McInnes2018}
\BIBentryALTinterwordspacing
L.~McInnes, J.~Healy, N.~Saul, and L.~Großberger, ``Umap: Uniform manifold approximation and projection,'' \emph{Journal of Open Source Software}, vol.~3, no.~29, p. 861, 2018. [Online]. Available: \url{https://doi.org/10.21105/joss.00861}
\BIBentrySTDinterwordspacing

\bibitem{vandermaaten08a}
\BIBentryALTinterwordspacing
L.~van~der Maaten and G.~Hinton, ``Visualizing data using t-sne,'' \emph{Journal of Machine Learning Research}, vol.~9, no.~86, pp. 2579--2605, 2008. [Online]. Available: \url{http://jmlr.org/papers/v9/vandermaaten08a.html}
\BIBentrySTDinterwordspacing

\bibitem{ester96}
M.~Ester, H.-P. Kriegel, J.~Sander, and X.~Xu, ``A density-based algorithm for discovering clusters in large spatial databases with noise,'' in \emph{Proceedings of the Second International Conference on Knowledge Discovery and Data Mining}, ser. KDD'96.\hskip 1em plus 0.5em minus 0.4em\relax AAAI Press, 1996, p. 226–231.

\bibitem{azad18}
A.~Azad, G.~A. Pavlopoulos, C.~A. Ouzounis, N.~C. Kyrpides, and A.~Bulu{\c{c}}, ``{HipMCL}: a high-performance parallel implementation of the markov clustering algorithm for large-scale networks,'' \emph{Nucleic acids research}, vol.~46, no.~6, pp. e33--e33, 2018.

\bibitem{zomordian10}
\BIBentryALTinterwordspacing
A.~Zomorodian, ``Fast construction of the vietoris-rips complex,'' \emph{Computers \& Graphics}, vol.~34, no.~3, pp. 263--271, 2010, shape Modelling International (SMI) Conference 2010. [Online]. Available: \url{https://www.sciencedirect.com/science/article/pii/S0097849310000464}
\BIBentrySTDinterwordspacing

\bibitem{bentley275}
J.~L. Bentley, ``A survey of techniques for fixed radius near neighbor searching.'' Stanford University, Stanford, CA, USA, Tech. Rep., 1975.

\bibitem{omohundro1989balltree}
S.~M. Omohundro, ``Five balltree construction algorithms,'' International Computer Science Institute, Tech. Rep. TR-89-063, December 1989.

\bibitem{yianilos93}
\BIBentryALTinterwordspacing
P.~N. Yianilos, ``Data structures and algorithms for nearest neighbor search in general metric spaces,'' in \emph{Proceedings of the Fourth Annual ACM-SIAM Symposium on Discrete Algorithms}, ser. SODA '93.\hskip 1em plus 0.5em minus 0.4em\relax USA: Society for Industrial and Applied Mathematics, 1993, p. 311–321. [Online]. Available: \url{https://dl.acm.org/doi/pdf/10.5555/313559.313789}
\BIBentrySTDinterwordspacing

\bibitem{xiao2016parallel}
B.~Xiao and G.~Biros, ``Parallel algorithms for nearest neighbor search problems in high dimensions,'' \emph{SIAM Journal on Scientific Computing}, vol.~38, no.~5, pp. S667--S699, 2016.

\bibitem{chen2024fast}
X.~Chen and S.~G{\"u}ttel, ``Fast and exact fixed-radius neighbor search based on sorting,'' \emph{PeerJ Computer Science}, vol.~10, p. e1929, 2024.

\bibitem{gu2022parallel}
Y.~Gu, Z.~Napier, Y.~Sun, and L.~Wang, ``Parallel cover trees and their applications,'' in \emph{Proceedings of the 34th ACM Symposium on Parallelism in Algorithms and Architectures}, 2022, pp. 259--272.

\bibitem{wang2018randomized}
Y.~Wang, A.~Shrivastava, J.~Wang, and J.~Ryu, ``Randomized algorithms accelerated over cpu-gpu for ultra-high dimensional similarity search,'' in \emph{Proceedings of the 2018 international conference on management of data}, 2018, pp. 889--903.

\bibitem{malkov2014approximate}
Y.~Malkov, A.~Ponomarenko, A.~Logvinov, and V.~Krylov, ``Approximate nearest neighbor algorithm based on navigable small world graphs,'' \emph{Information Systems}, vol.~45, pp. 61--68, 2014.

\bibitem{naidan2015permutation}
B.~Naidan, L.~Boytsov, and E.~Nyberg, ``Permutation search methods are efficient, yet faster search is possible,'' \emph{arXiv preprint arXiv:1506.03163}, 2015.

\bibitem{malkov2018efficient}
Y.~A. Malkov, ``Efficient and robust approximate nearest neighbor search using hierarchical navigable small world graphs,'' \emph{IEEE transactions on pattern analysis and machine intelligence}, vol.~42, no.~4, pp. 824--836, 2018.

\bibitem{indyk1998approximate}
P.~Indyk and R.~Motwani, ``Approximate nearest neighbors: towards removing the curse of dimensionality,'' in \emph{Proceedings of the thirtieth annual ACM symposium on Theory of computing}, 1998, pp. 604--613.

\bibitem{dasgupta2008random}
S.~Dasgupta and Y.~Freund, ``Random projection trees and low dimensional manifolds,'' in \emph{Proceedings of the fortieth annual ACM symposium on Theory of computing}, 2008, pp. 537--546.

\bibitem{jones2011randomized}
P.~W. Jones, A.~Osipov, and V.~Rokhlin, ``Randomized approximate nearest neighbors algorithm,'' \emph{Proceedings of the National Academy of Sciences}, vol. 108, no.~38, pp. 15\,679--15\,686, 2011.

\bibitem{ranawaka2023distributed}
I.~Ranawaka, M.~K. Rahman, and A.~Azad, ``Distributed sparse random projection trees for constructing k-nearest neighbor graphs,'' in \emph{2023 IEEE International Parallel and Distributed Processing Symposium (IPDPS)}.\hskip 1em plus 0.5em minus 0.4em\relax IEEE, 2023, pp. 36--46.

\bibitem{karger02}
\BIBentryALTinterwordspacing
D.~R. Karger and M.~Ruhl, ``Finding nearest neighbors in growth-restricted metrics,'' in \emph{Proceedings of the Thiry-Fourth Annual ACM Symposium on Theory of Computing}, ser. STOC '02.\hskip 1em plus 0.5em minus 0.4em\relax New York, NY, USA: Association for Computing Machinery, 2002, p. 741–750. [Online]. Available: \url{https://doi.org/10.1145/509907.510013}
\BIBentrySTDinterwordspacing

\bibitem{beygelzimer06}
\BIBentryALTinterwordspacing
A.~Beygelzimer, S.~Kakade, and J.~Langford, ``Cover trees for nearest neighbor,'' in \emph{Proceedings of the 23rd International Conference on Machine Learning}, ser. ICML '06.\hskip 1em plus 0.5em minus 0.4em\relax New York, NY, USA: Association for Computing Machinery, 2006, p. 97–104. [Online]. Available: \url{https://doi.org/10.1145/1143844.1143857}
\BIBentrySTDinterwordspacing

\bibitem{izbicki15}
\BIBentryALTinterwordspacing
M.~Izbicki and C.~Shelton, ``Faster cover trees,'' in \emph{Proceedings of the 32nd International Conference on Machine Learning}, ser. Proceedings of Machine Learning Research, F.~Bach and D.~Blei, Eds., vol.~37.\hskip 1em plus 0.5em minus 0.4em\relax Lille, France: PMLR, 07--09 Jul 2015, pp. 1162--1170. [Online]. Available: \url{https://proceedings.mlr.press/v37/izbicki15.html}
\BIBentrySTDinterwordspacing

\bibitem{elkin23}
\BIBentryALTinterwordspacing
Y.~Elkin and V.~Kurlin, ``A new near-linear time algorithm for k-nearest neighbor search using a compressed cover tree,'' in \emph{Proceedings of the 40th International Conference on Machine Learning}, ser. Proceedings of Machine Learning Research, A.~Krause, E.~Brunskill, K.~Cho, B.~Engelhardt, S.~Sabato, and J.~Scarlett, Eds., vol. 202.\hskip 1em plus 0.5em minus 0.4em\relax PMLR, 23--29 Jul 2023, pp. 9267--9311. [Online]. Available: \url{https://proceedings.mlr.press/v202/elkin23a.html}
\BIBentrySTDinterwordspacing

\bibitem{elkin22}
------, ``Counterexamples expose gaps in the proof of time complexity for cover trees introduced in 2006,'' in \emph{2022 Topological Data Analysis and Visualization (TopoInVis)}, 2022, pp. 9--17.

\bibitem{heller90}
G.~Hller and Schulten, ``Molecular dynamics simulation on a parallel computer,'' \emph{Molecular Simulation}, vol.~5, 1990.

\bibitem{gonzalez85}
\BIBentryALTinterwordspacing
T.~F. Gonzalez, ``Clustering to minimize the maximum intercluster distance,'' \emph{Theoretical Computer Science}, vol.~38, pp. 293--306, 1985. [Online]. Available: \url{https://www.sciencedirect.com/science/article/pii/0304397585902245}
\BIBentrySTDinterwordspacing

\bibitem{sheehy20onehop}
\BIBentryALTinterwordspacing
D.~R. Sheehy, ``One hop greedy permutations,'' in \emph{Proceedings of the 32nd Canadian Conference on Computational Geometry}, 2020, pp. 221--225. [Online]. Available: \url{https://donsheehy.net/research/sheehy20onehop.pdf}
\BIBentrySTDinterwordspacing

\bibitem{graham66}
R.~L. Graham, ``Bounds for certain multiprocessing anomalies,'' \emph{The Bell System Technical Journal}, vol.~45, no.~9, pp. 1563--1581, 1966.

\bibitem{curtin13a}
\BIBentryALTinterwordspacing
R.~R. Curtin, J.~R. Cline, N.~P. Slagle, W.~B. March, P.~Ram, N.~A. Mehta, and A.~G. Gray, ``Mlpack: A scalable c++ machine learning library,'' \emph{Journal of Machine Learning Research}, vol.~14, no.~24, pp. 801--805, 2013. [Online]. Available: \url{http://jmlr.org/papers/v14/curtin13a.html}
\BIBentrySTDinterwordspacing

\bibitem{Simhadri2022ResultsOT}
\BIBentryALTinterwordspacing
H.~V. Simhadri, G.~R. Williams, M.~Aum{\"u}ller, M.~Douze, A.~Babenko, D.~Baranchuk, Q.~Chen, L.~Hosseini, R.~Krishnaswamy, G.~Srinivasa, S.~J. Subramanya, and J.~Wang, ``Results of the neurips'21 challenge on billion-scale approximate nearest neighbor search,'' in \emph{Neural Information Processing Systems}, 2022. [Online]. Available: \url{https://api.semanticscholar.org/CorpusID:248572299}
\BIBentrySTDinterwordspacing

\bibitem{jegou11}
H.~Jégou, M.~Douze, and C.~Schmid, ``Product quantization for nearest neighbor search,'' \emph{IEEE Transactions on Pattern Analysis and Machine Intelligence}, vol.~33, no.~1, pp. 117--128, 2011.

\bibitem{aumuller20}
\BIBentryALTinterwordspacing
M.~Aumüller, E.~Bernhardsson, and A.~Faithfull, ``Ann-benchmarks: A benchmarking tool for approximate nearest neighbor algorithms,'' \emph{Information Systems}, vol.~87, p. 101374, 2020. [Online]. Available: \url{https://www.sciencedirect.com/science/article/pii/S0306437918303685}
\BIBentrySTDinterwordspacing

\bibitem{lee15}
J.-P. Heo, Y.~Lee, J.~He, S.-F. Chang, and S.-E. Yoon, ``Spherical hashing: Binary code embedding with hyperspheres,'' \emph{IEEE Transactions on Pattern Analysis and Machine Intelligence}, vol.~37, no.~11, pp. 2304--2316, 2015.

\bibitem{word2bits}
\BIBentryALTinterwordspacing
M.~Lam, ``Word2bits - quantized word vectors,'' 2018. [Online]. Available: \url{https://arxiv.org/abs/1803.05651}
\BIBentrySTDinterwordspacing

\bibitem{scikit}
F.~Pedregosa, G.~Varoquaux, A.~Gramfort, V.~Michel, B.~Thirion, O.~Grisel, M.~Blondel, P.~Prettenhofer, R.~Weiss, V.~Dubourg, J.~Vanderplas, A.~Passos, D.~Cournapeau, M.~Brucher, M.~Perrot, and E.~Duchesnay, ``Scikit-learn: Machine learning in python,'' \emph{J. Mach. Learn. Res.}, vol.~12, no. null, p. 2825–2830, Nov. 2011.

\end{thebibliography}

\end{document}